\newtheorem{theorem}{Theorem}
\newtheorem{definition}{Definition}[section]
\newtheorem{lemma}[theorem]{Lemma}
\newcommand{\R}{\mathbb{R}}
\begin{document}
\title{A New Theoretical Evaluation Framework for \\Satisfaction Equilibria in Wireless Networks}
\author{\IEEEauthorblockN{Michail Fasoulakis\IEEEauthorrefmark{1},
Eirini Eleni Tsiropoulou\IEEEauthorrefmark{2},
Symeon Papavassiliou\IEEEauthorrefmark{1}}\\
 \{mfasoul@netmode.ntua.gr, eirini@unm.edu, papavass@mail.ntua.gr\} \\
\IEEEauthorblockA{\IEEEauthorrefmark{1}School of Electrical and Computer Engineering,\\ National Technical University of Athens (NTUA), Greece}\\
\IEEEauthorblockA{\IEEEauthorrefmark{2}Department of Electrical and Computer Engineering,\\
 University of New Mexico, USA}}
\maketitle

\begin{abstract}
In this paper, a theoretical evaluation framework regarding the \textit{Satisfaction Equilibrium (SE)} in wireless communication networks is introduced and examined. To study these equilibria operation points, we coin some new concepts, namely the \textit{Valued Satisfaction Equilibrium}, the \textit{Price of Efficiency} and the \textit{Max Price of Satisfaction}, which can be used for measuring the efficiency of the obtained equilibria solutions. The aforementioned framework is analyzed and evaluated in a wireless communication environment under the presence of the Gaussian Interference channel (GIC). Within this setting, a non-cooperative game among the users is studied, where users aim in a selfish manner to meet their Quality of Service (QoS) prerequisite. However instead of maximizing the QoS which is generally energy costly, we evangelize that better energy-efficiency is achieved by targeting satisfactory QoS levels only. The sufficient and necessary conditions that lead to the \textit{Satisfaction Equilibrium} are provided for the two-user case and the \textit{Efficient Satisfaction Equilibrium (ESE)} is determined, where the users satisfy their QoS constraints with the lowest possible cost. Moreover, specific measures for evaluating the efficiency of various satisfaction equilibria, in a formal and quantitative manner, expressing the tradeoff with respect to the achieved utility or a given objective function and corresponding cost, are defined and analyzed.  
\end{abstract}

\begin{IEEEkeywords}
Game Theory, QoS Satisfaction, Satisfaction Equilibrium, Price of Efficiency.
\end{IEEEkeywords}

\section{Introduction}
Current and future wireless networks face new and interesting challenges in the increasingly changing communications environment. The volume of the transmitted data traffic and the number of users are continuously increasing. Typically, these systems present competitive environments that induce constraints, and users evolve with others, where their decisions and actions are interdependent. 

In such a competitive and distributed environment, \textit{Game Theory} arises as a natural choice and a powerful theoretical tool to cope with the corresponding resource allocation problems, while reflecting and modeling the different interactions among the users \cite{lasaulce2011game}. Respecting the need for distributed and scalable solutions and algorithms, the focus has been placed on the study of non-cooperative game theory paradigm where decisions are taken autonomously by the end users \cite{elhammouti2017self}. 

The majority of existing approaches have relied on the principles of Expected Utility Maximization, where users aim at selfishly maximizing their own degree of satisfaction upon receiving a service, as expressed through various forms of utility functions, resulting to some equilibrium \cite{tsiropoulou2017supermodular}. However, firstly a major disadvantage that is common to most of these equilibrium concepts (e.g., Nash equilibrium) is that stability depends on whether or not each user achieves the highest performance possible. This does not necessarily properly reflect in reality the most desirable solution, neither from a single user point of view nor from a network point of view, as a user may require only to ensure a specific minimum QoS condition \cite{southwell2014quality}. To overcome this constraint and problem, in this work a new solution concept known as Satisfaction Equilibrium (SE) \cite{ross2006satisfaction} is adopted and exploited in the realm of wireless communications environments. The main purpose of this paper is to introduce a holistic theoretical analysis and evaluation framework for studying these equilibria in wireless communication networks, with respect to conditions of existence and their efficiency. Its application in the presence of \textit{Gaussian Interference channel (GIC)} is specifically analyzed and examined.  

\subsection{Related work and our contribution}
Most of the efforts so far in the wireless communications research community were concentrated on the total QoS maximization, which however resulted in unjustified network resources drainage or unfair resources allocation among the users \cite{tsiropoulou2016uplink}. Instead, in our work to overcome these deficiencies, we adopt the philosophy of aiming satisfactory QoS \cite{ross2006satisfaction,meriaux2012achievability,perlaza2012quality} rather than targeting optimality in terms of QoS maximization \cite{katsinis2017joint}. There are two key motivating factors behind this consideration. The first one refers to meeting user expectations. Several types of services are either simply interested in achieving a minimum QoS level, or corresponding users are insensitive to small QoS changes \cite{ross2006satisfaction}. Furthermore, users may not be willing to consume additional resources or pay higher price for a better QoS level, which in turn translates to an increase in network capacity in terms of satisfied users \cite{meriaux2012achievability}. The second factor behind the choice of satisfactory solutions stems from the need of energy consumption reduction \cite{perlaza2012quality, goonewardena2017generalized}. QoS maximization typically requires unnecessarily high energies \cite{tsiropoulou2017supermodular}, whereas achieving an efficient and feasible network operation point that meets users’ expectations can result in substantial energy savings, a critical factor in several resource-constrained environments. 

This objective in this paper is treated via a novel mathematical concept within the framework of game theory, referred to as satisfaction equilibrium (SE) \cite{perlaza2012quality}. For studying these equilibria, we introduce and coin some new concepts, namely the \textit{Valued Satisfaction equilibrium}, the \textit{Price of Efficiency} and the \textit{Max Price of Satisfaction}, that can be used for benchmarking purposes of the achievable solutions. The applicability of the aforementioned framework and concepts are demonstrated considering a wireless communication environment characterized by Gaussian Interference channel, where the users have limited transmission power \cite{fasoulakis2017gaussian}. Towards this direction, we formulated the \textit{Gaussian Interference channel} as a non-cooperative game among the users, where each user has a minimum QoS prerequisite, represented as a minimum achievable data rate. Specific necessary and sufficient conditions that lead to the SE for the two-user case are obtained. The satisfaction equilibrium point is achieved when all the users of the network (i.e., players in the game) satisfy their minimum QoS requirements irrespective of the utility value they achieve. Relaxing the maximization assumptions essentially we enlarge the potential set of feasible strategies since instead of restricting ourselves to solutions that permit global optimum, we follow a less restrictive approach and extend the solution space to a broader set. Subsequently the main objective becomes - in contrast to reaching other forms of equilibria such as Nash equilibrium (NE) - to reach a network state in which all end users satisfy their individual QoS by investing the minimum effort, referred to as Efficient Satisfaction Equilibrium (ESE) \cite{meriaux2012achievability}. 

\subsection{Outline}

The remaining of this work is organized as follows. In Section \ref{sec:Game Theory}, the fundamental concepts from the field of Game Theory are initially presented. Therefore, concepts such as satisfaction equilibrium and efficient satisfaction equilibrium are reviewed, while the new concepts and terms of valued satisfaction equilibrium, price of efficiency, and max price of satisfaction are defined. In Section \ref{sec:Model}, the baseline proposed Gaussian Interference Channel model is introduced, as well as the users' adopted utility function and accordingly the corresponding game theoretic formulation of satisfying users' QoS prerequisites is presented. Section \ref{sec:Existence} discusses the conditions that lead to the satisfaction equilibrium for the two-user case, while in Section \ref{sec:PoE} the price of efficiency and max price of satisfaction for evaluating the efficiency of the various SEs in a formal and quantitative manner, are analyzed. Finally Section \ref{sec:Conclusions} concludes this article. 

\section{Game theory definitions} \label{sec:Game Theory}
We consider formally a game in satisfaction form as a tetrad 
\[G = \Big(N, \{A_i\}_{i \in N}, \{u_i\}_{i\in N}, \{f_i\}_{i\in N}\Big),\]
where $N$ is the set of the players with size $|N|$, $A_i$ is the set of all possible pure strategies (actions) of the player $i$, $u_i$ is the utility function of the player $i$ and $f_i$ is the set of all satisfied actions under a constraint of the player $i$ given the actions of all other players.
A strategy profile is an $|N|$-tuple $\mathbf{a}=(a_1,a_2, \dots, a_{|N|}) \in A_1\times A_2\times \dots \times A_{|N|}$ of the strategies of the players. Also, $\mathbf{a_{-i}}$ is the strategy profile of all players except for the player $i$.
We have the following definitions.

\begin{definition}[Nash equilibrium \cite{nash1951non}]
A Nash equilibrium is a strategy profile $\mathbf{a^*} = (a_i^*, \mathbf{a_{-i}^*})$ such that, for any $i$ and any strategy $a_i \in A_i$,
\[u_i(a_i^*,\mathbf{a_{-i}^*}) \geq u_i(a_i,\mathbf{a_{-i}^*}).\]
\end{definition}

\begin{definition}[Generalized Nash equilibrium \cite{debreu1952social}] 
A Generalized Nash equilibrium is a strategy profile $\mathbf{a^*} = (a_i^*, \mathbf{a_{-i}^*})$ such that, for any $i$,
\[a_i^* \in f_i(\mathbf{a_{-i}^*}), \text{ and}\]
\[u_i(a_i^*,\mathbf{a_{-i}^*}) \geq u_i(a_i,\mathbf{a_{-i}^*}),\]
for any $a_i \in f_i(\mathbf{a_{-i}^*})$.
\end{definition}

\begin{definition}[Satisfaction equilibrium \cite{perlaza2012quality}]
A satisfaction equilibrium is a strategy profile $\mathbf{a^*} = (a_i^*, \mathbf{a_{-i}^*})$ such that, for any $i$,
\[a_i^* \in f_i(\mathbf{a_{-i}^*}).\]
\end{definition}

\begin{definition}[Efficient satisfaction equilibrium \cite{perlaza2012quality}]
An efficient satisfaction equilibrium is a strategy profile $\mathbf{a^*} = (a_i^*, \mathbf{a_{-i}^*})$ such that, for any $i$,
\[a_i^* \in f_i(\mathbf{a_{-i}^*}), \text{ and }c_i(\mathbf{a^*}) \leq c_i(\mathbf{a}),\]
for any $\mathbf{a} = (a_i, \mathbf{a_{-i}^*})$ with $a_i \in f_i(\mathbf{a_{-i}^*})$, where $c_i: A_1 \times A_2 \times \dots A_{|N|} \mapsto \R_{+}$ is a cost function.
\end{definition}

In the following we first introduce the \textit{Valued Satisfaction equilibrium}, which essentially corresponds to efficient satisfaction equilibria, with a new cost function ${c_i}/{u_i}$ measuring the tradeoff between the achieved utility $u_i$ and the corresponding initial cost $c_i$. Specifically:

\begin{definition}[Valued satisfaction equilibrium]
A valued satisfaction equilibrium is a strategy profile $\mathbf{a^*} = (a_i^*,\mathbf{a_{-i}^*})$ such that, for any $i$,
\[a_i^* \in f_i(\mathbf{a_{-i}^*}), \text{ and}\]
\[\frac{c_i(\mathbf{a^*})}{u_i(\mathbf{a^*})} \leq \frac{c_i(\mathbf{a})}{u_i(\mathbf{a})},\]
for any for any $\mathbf{a} = (a_i, \mathbf{a_{-i}^*})$ with $a_i \in f_i(\mathbf{a_{-i}^*})$, where $c_i$ is the cost function and $u_i$ the utility function, assuming that $c_i(a)>0$ and $u_i(a)>0$ for any $a \in A_1 \times A_2 \times \dots A_{|N|}$. 
\end{definition}

Below, we also introduce a new concept, referred to as Price of Efficiency, to measure the efficiency of the efficient satisfaction equilibria. Intuitively, this concept aims to evaluate in a formal and quantitative manner, the efficient satisfaction equilibria solution, in terms of tradeoff between the achieved utility and cost, when compared to the valued satisfaction equilibrium operation point. 

\begin{definition}[Price of Efficiency]
Let $\mathbf{a^*}$ be a strategy profile such that it is the best valued satisfaction equilibrium (under the summation function). Furthermore, let $\mathbf{a}$ be a strategy profile such that $\mathbf{a}$ is the worst efficient satisfaction equilibrium (under the summation function). Then, the Price of Efficiency (PoE) is
\[PoE = \frac{\sum\limits_{\forall i}\frac{c_i(\mathbf{a})}{u_i(\mathbf{a})}}{\sum\limits_{\forall i}\frac{c_i(\mathbf{a^*})}{u_i(\mathbf{a^*})}},\]
assuming that $c_i(a)>0$ and $u_i(a)>0$ for any $a \in A_1\times A_2\times \dots \times A_{|N|}$.
\end{definition}

Last but not least, we also define a measure indicating the maximum price (i.e., distance) characterizing the achievable satisfaction equilibria, under the consideration of an objective function $g$. Formally: 
\begin{definition}[Max Price of Satisfaction]
Let $\mathbf{a^*}$ be a strategy profile such that it is a satisfaction equilibrium and it is an optimal of an objective function $g$, i.e., $g$ is the sum of the utilities (utilitarian optimum). Furthermore, let $\mathbf{a}$ be a strategy profile such that $\mathbf{a}$ is the worst satisfaction equilibrium under the objective function $g$. Then, the Max Price of Satisfaction (MPoSa) is
\[ MPoSa = \frac{g(\mathbf{a^*})}{g(\mathbf{a})},\]
assuming that $g(a)>0$ for any $a \in A_1\times A_2\times \dots \times A_{|N|}$.
\end{definition}
Since, $g(\mathbf{a^*}) \geq g(\mathbf{a})$, we have that the MPoSA is no less than one.
\section{The Model} \label{sec:Model}
We consider $|N|$ pairs of transmitter-receiver in the Gaussian Interference Channel. Any transmitter $i\in\{1,2,\dots,|N|\}$ uses the same frequency to transmit her message to her receiver $i$ and simultaneously interferes the receivers of the other transmitters. Any receiver $i$ treats the signal of a transmitter $j$ different than $i$ as interference. We can see this competitive situation as a non-cooperative game \cite{osborne1994course}. Any transmitter wants to maximize her rate.
The set of the pure strategies of any transmitter $i$ is the power $p_i \in [0, P_{\max}]$ that chooses to transmit her message, where $P_{\max}$ is the maximum possible power, imposed by physical constraints and equipment characteristics. The utility function of any transmitter $i$ is

\begin{equation}
\label{utility function}
u_i(p_i,\mathbf{p_{-i}}) = \frac{1}{2}\log\Bigg(1+\frac{{h_{ii}p_i}}{{\sum\limits_{\substack{j=1 \\ j\neq i}}^{|N|}h_{ji}p_j+ I}}\Bigg),
\end{equation}
where $h_{ji}>0$ is the channel attenuation between the transmitter $j$ and the receiver $i$, $\mathbf{p_{-i}}$ is the vector of the powers of the transmitters except for the transmitter $i$ and $I>0$ is the \emph{Additive White Gaussian Noise (AWGN)} of the channel. This can be written as 
\begin{equation}
\label{utility function2}
u_i(p_i,\mathbf{p_{-i}}) = \frac{1}{2}\log\Bigg(1+\frac{{p_i}}{{\sum\limits_{\substack{j=1 \\ j\neq i}}^{|N|}a_{ji}p_j+ I_i}}\Bigg),
\end{equation}
where $a_{ji} = h_{ji}/h_{ii}>0$ is the new channel attenuation between the transmitter $j$ and the receiver $i$ and $I_i = I/h_{ii}$. We give the following definitions in the context of the Gaussian Interference channel.

\begin{definition}[Nash equilibrium]
A strategy profile $(p_i^*, \mathbf{p_{-i}^*})$ is a Nash equilibrium such that, for any $i$ and for any $p_i$,
\[u_i(p_i^*,\mathbf{p_{-i}^*}) \geq u_i(p_i,\mathbf{p_{-i}^*}),\]
in other words $p_i^*$ is a best-response strategy to the strategies $\mathbf{p_{-i}^*}$ of the other transmitters.
\end{definition}

\begin{definition}[Generalized Nash equilibrium]
A strategy profile $(p_i^*, \mathbf{p_{-i}^*})$ is a generalized Nash equilibrium such that, for any $i$ and for any $p_i$ such that $u_i(p_i,\mathbf{p_{-i}^*}) \geq \Gamma_i$, we have
\[u_i(p_i^*,\mathbf{p_{-i}^*}) \geq u_i(p_i,\mathbf{p_{-i}^*}),\]
where $\Gamma_i$ is a positive constant representing user's $i$ QoS prerequisites in terms of achievable transmission data rate.
\end{definition}

\begin{definition}[Satisfaction equilibrium]
A strategy profile $(p_i^*, \mathbf{p_{-i}^*})$ is a satisfaction equilibrium such that, for any $i$,
\[u_i(p_i^*,\mathbf{p_{-i}^*}) \geq \Gamma_i,\]
in other words $p_i^*$ is a satisfaction strategy to the strategies $\mathbf{p_{-i}^*}$ of the other transmitters, $\Gamma_i$ is the satisfaction threshold (positive constant) of the transmitter $i$.
\end{definition}

\begin{definition}[Efficient satisfaction equilibrium]
A strategy profile $(p_i^*, \mathbf{p_{-i}^*})$ is an efficient satisfaction equilibrium, if it is a satisfaction equilibrium and, for any $i$ and any $p_i$ such that $u_i(p_i,\mathbf{p_{-i}^*}) \geq \Gamma_i$, we have
\[p_i^* \leq p_i,\]
or in other words the transmitter $i$ uses the less power to be satisfied.
\end{definition}

\begin{definition}[Valued satisfaction equilibrium]
A strategy profile $(p_i^*, \mathbf{p_{-i}^*})$ is a valued satisfaction equilibrium, if it is a satisfaction equilibrium and, for any $i$ and any $p_i$ such that $u_i(p_i,\mathbf{p_{-i}^*}) \geq \Gamma_i$, we have
\[\frac{p_i^*}{u_i(p_i^*,\mathbf{p_{-i}^*})} \leq \frac{p_i}{u_i(p_i,\mathbf{p_{-i}^*})}.\]

\end{definition}

\section{Existence of SE} \label{sec:Existence}

The satisfaction equilibria of the Gaussian Interference channel have been studied previously in \cite{meriaux2012achievability,perlaza2012quality} but no conditions of existence have been given. In this paper we will investigate the conditions of existence of satisfaction equilibria in this scenario.
By the utility function of the player $i$ we have
\begin{align}
\label{area of SE}
u_i(p_i,\mathbf{p_{-i}}) =  \frac{1}{2}\log\Bigg(1+\frac{{p_i}}{{\sum\limits_{\substack{j=1 \\ j\neq i}}^{|N|}a_{ji}p_j+ I_i}}\Bigg) \geq \Gamma_i \\
\Leftrightarrow p_{i} \geq (4^{\Gamma_i}-1)\Big(\sum\limits_{\substack{j=1 \\ j\neq i}}^{|N|}a_{ji}p_j+I_i\Big).
\end{align}
These are the inequalities that characterize the region of the SE. We can write up the inequalities (\ref{area of SE}) as a linear system 
\begin{align}
\label{linear system}
\pmb{A}\cdot \pmb{p} \geq \pmb{b},
\end{align}
where \\
\begin{multline*}
\pmb{A} = \begin{pmatrix} 1 && a_{21}(1-4^{\Gamma_1}) && \dots && a_{|N|1}(1-4^{\Gamma_1}) \\ a_{12}(1-4^{\Gamma_2}) && 1  && \dots && a_{|N|2}(1-4^{\Gamma_2}) \\ &&  && \dots && && \\ a_{1|N|}(1-4^{\Gamma_{|N|}}) && a_{2|N|}(1-4^{\Gamma_{|N|}}) && \dots &&  1 \end{pmatrix},\\\\
\pmb{p} = \begin{pmatrix} p_1 \\ p_2 \\ \dots \\ p_{|N|} \end{pmatrix}, \text{ and }
\pmb{b} = \begin{pmatrix} (4^{\Gamma_1}-1)I_1 \\ (4^{\Gamma_2}-1)I_2 \\ \dots \\ (4^{\Gamma_{|N|}}-1)I_{|N|}
\end{pmatrix}.
\end{multline*}
There is at least one SE if there is at least one feasible solution in the system $\pmb{A}\cdot \pmb{p} \geq \pmb{b}$ .

\subsection{Two-player game}
We study the conditions of existence of the SE in the case of the two players and we conclude to the following Theorems.
\begin{theorem}
The satisfaction equilibrium region of the two-player game is the region such that $p_1 \geq (4^{\Gamma_1}-1)(a_{21}p_2+I_1)$ and $p_2 \geq (4^{\Gamma_2}-1)(a_{12}p_1+I_2)$.
\end{theorem}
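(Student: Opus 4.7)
The plan is to derive the two inequalities directly from the definition of satisfaction equilibrium specialized to $|N|=2$, and to check both directions of the equivalence (so the stated region is exactly the SE region, not merely contained in it or containing it).

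First, I would recall the SE definition in the Gaussian Interference Channel setting: a profile $(p_1^*, p_2^*) \in [0,P_{\max}]^2$ is an SE iff $u_i(p_i^*, p_{-i}^*) \ge \Gamma_i$ for $i=1,2$. Substituting the explicit utility from \eqref{utility function2} with $|N|=2$, the condition for player~1 reads
\[
\frac{1}{2}\log\!\left(1+\frac{p_1}{a_{21}p_2+I_1}\right)\ge \Gamma_1,
\]
and analogously for player~2. Since $\log$ is taken base $2$ (consistent with the $4^{\Gamma_i}=2^{2\Gamma_i}$ factor that already appears in \eqref{area of SE}), the monotonicity of the logarithm and the strict positivity of $a_{21}p_2+I_1>0$ (because $I_1>0$ and $p_2\ge 0$, $a_{21}>0$) let me rearrange equivalently to
\[
p_1 \ge (4^{\Gamma_1}-1)(a_{21}p_2+I_1),
\]
and symmetrically $p_2 \ge (4^{\Gamma_2}-1)(a_{12}p_1+I_2)$. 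Each manipulation is an \emph{if and only if}, so the SE region is exactly the intersection of the two half–plane constraints.

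The only place to be careful is the direction of the inequality when multiplying through by the denominator; I would make explicit that $a_{21}p_2+I_1$ and $a_{12}p_1+I_2$ are strictly positive, so multiplying preserves the inequality sign. I would also note that, although the strategy sets are bounded by $[0,P_{\max}]$, the characterization of the region does not require invoking $P_{\max}$: the theorem describes the set of profiles satisfying both QoS constraints, and feasibility of an SE within $[0,P_{\max}]^2$ is a separate question (addressed later via the linear system $\pmb{A}\pmb{p}\ge \pmb{b}$).

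Honestly, there is no real obstacle here: the statement is a direct algebraic rewriting of the two SE inequalities in \eqref{area of SE} for the case $|N|=2$, and the proof is essentially a two–line computation plus a remark that the equivalence holds in both directions. The only ``content'' is verifying that the quoted region is tight, which follows because every step is reversible.
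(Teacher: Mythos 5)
Your proposal is correct and follows essentially the same route as the paper's own proof: both derive each inequality as a direct equivalence from $u_i \geq \Gamma_i$ using the explicit form of the utility in \eqref{utility function2}, with the base-2 logarithm yielding the factor $4^{\Gamma_i} = 2^{2\Gamma_i}$. Your added remarks on the strict positivity of the denominators (justifying the sign-preserving multiplication) and on the reversibility of each step make explicit what the paper leaves implicit, but they do not change the argument.
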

\begin{proof}
By the utility function of the player 1 we have
\begin{multline*}
u_1(p_1,p_2) = \frac{1}{2}\log\Big(1+\frac{p_1}{a_{21}p_2+I_1}\Big)\geq \Gamma_1\\ \Leftrightarrow p_1 \geq (4^{\Gamma_1}-1)(a_{21}p_2+I_1).
\end{multline*}
Similarly, for the player 2 we have
\begin{multline*}
u_2(p_2,p_1) = \frac{1}{2}\log\Big(1+\frac{p_2}{a_{12}p_1+I_2}\Big)\geq \Gamma_2\\ \Leftrightarrow p_2 \geq (4^{\Gamma_2}-1)(a_{12}p_1+I_2).
\end{multline*}
Since, $\Gamma_1$ and $\Gamma_2$ are positive is clear that $(4^{\Gamma_1}-1)(a_{21}p_2+I_1) > 0$ and $(4^{\Gamma_2}-1)(a_{12}p_1+I_2) > 0$.
\end{proof}

\begin{theorem}
\label{intersection}
There is at least one satisfaction equilibrium if and only if the lines \begin{equation} \label{line 1} p_2 = {p_1}/({(4^{\Gamma_1}-1)a_{21}})-{I_1}/{a_{21}}\end{equation} and \begin{equation}\label{line 2} p_2 = (4^{\Gamma_2}-1)(a_{12}p_1+I_2) \end{equation} cross in the first quadrant, or in other words $a_{21}a_{12} < \frac{1}{(4^{\Gamma_1}-1)(4^{\Gamma_2}-1)}$.
\end{theorem}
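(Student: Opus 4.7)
The plan is to reduce the question of SE existence to the geometry of two lines in the $(p_1, p_2)$-plane, using the characterization of the SE region already given by the previous theorem, and then match the slope condition to the stated inequality on $a_{12}a_{21}$.

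First I would start from Theorem~1, which says the SE region in the two-player game is the set of $(p_1, p_2)$ with $p_1, p_2 \geq 0$ satisfying $p_1 \geq (4^{\Gamma_1}-1)(a_{21}p_2 + I_1)$ and $p_2 \geq (4^{\Gamma_2}-1)(a_{12}p_1 + I_2)$. Since $(4^{\Gamma_1}-1)a_{21} > 0$, the first inequality is equivalent to $p_2 \leq p_1/((4^{\Gamma_1}-1)a_{21}) - I_1/a_{21}$, i.e., to lying on or below the line~\eqref{line 1}. The second inequality says $(p_1,p_2)$ lies on or above the line~\eqref{line 2}. Thus the SE region is the set of first-quadrant points sandwiched between line~\eqref{line 2} (from below) and line~\eqref{line 1} (from above).

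Next I would exploit the $y$-intercepts and slopes. Line~\eqref{line 1} has $y$-intercept $-I_1/a_{21} < 0$, and line~\eqref{line 2} has $y$-intercept $(4^{\Gamma_2}-1)I_2 > 0$. So at $p_1 = 0$, line~\eqref{line 1} lies strictly below line~\eqref{line 2}. A non-empty sandwich therefore requires line~\eqref{line 1} to catch up to and cross line~\eqref{line 2} at some $p_1 \geq 0$, which happens iff its slope $1/((4^{\Gamma_1}-1)a_{21})$ strictly exceeds the slope $(4^{\Gamma_2}-1)a_{12}$ of line~\eqref{line 2}; otherwise line~\eqref{line 2} remains above line~\eqref{line 1} for every $p_1 \geq 0$ and the SE region is empty. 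Rearranging the slope inequality directly yields $a_{12}a_{21} < 1/((4^{\Gamma_1}-1)(4^{\Gamma_2}-1))$.

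For the converse direction, assuming this slope inequality, I would solve the two lines simultaneously to locate the intersection point $(p_1^\star, p_2^\star)$. At $p_1 = 0$ line~\eqref{line 1} is below line~\eqref{line 2}, and the slope of line~\eqref{line 1} is larger, so they meet at some $p_1^\star > 0$; then $p_2^\star = (4^{\Gamma_2}-1)(a_{12}p_1^\star + I_2) > 0$, confirming the intersection lies in the (open) first quadrant. This point satisfies both original SE inequalities with equality, hence is a satisfaction equilibrium, which finishes the equivalence.

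The main obstacle, and the only place where care is needed, is the boundary between the two regimes: I must verify that equality of the two slopes really gives an empty SE region (line~\eqref{line 2} then lies strictly above line~\eqref{line 1} everywhere because of the ordering of the $y$-intercepts), and that strict slope inequality forces the intersection to occur at strictly positive $p_1$ rather than on the axis. Both follow from the strict sign conditions $I_1, I_2 > 0$ and $a_{21}, (4^{\Gamma_i}-1) > 0$, so no separate edge cases arise.
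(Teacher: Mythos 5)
Your proposal is correct and follows essentially the same route as the paper: reduce SE existence to the geometry of the two boundary lines and compare their slopes to obtain $a_{21}a_{12} < \frac{1}{(4^{\Gamma_1}-1)(4^{\Gamma_2}-1)}$. You simply make explicit the details the paper's graphical argument leaves implicit, namely the ordering of the $y$-intercepts ($-I_1/a_{21}<0$ versus $(4^{\Gamma_2}-1)I_2>0$), the equal-slope edge case, and the verification that the intersection point itself is a satisfaction equilibrium.
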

\begin{proof}
Graphically, there is at least one feasible solution in the area described by $p_1 \geq (4^{\Gamma_1}-1)(a_{21}p_2+I_1)$ and $p_2 \geq (4^{\Gamma_2}-1)(a_{12}p_1+I_2)$ if and only if the two lines $p_2 = \frac{p_1}{(4^{\Gamma_1}-1)a_{21}}-\frac{I_1}{a_{21}}$ and $p_2 = (4^{\Gamma_2}-1)(a_{12}p_1+I_2)$ cross each other in the first quadrant. In order the two lines to cross each other in the first quadrant the slope of the first one must be greater than the slope of the second one. Thus, $\frac{1}{(4^{\Gamma_1}-1)a_{21}} > (4^{\Gamma_2}-1)a_{12}$.
\end{proof}
\begin{figure}
\centering
\captionsetup{justification=centering}
\subfigure{
\begin{tikzpicture}[scale = 1.5]
\draw [<->,thick] (0,2) node (yaxis) [above] {$p_2$}
     |- (2.7,0) node (xaxis) [right] {$p_1$};
\draw (0.5,0) coordinate (a_1) -- (2,1.7) coordinate (a_2);
\draw (0.5,0) node [below] {$(4^{\Gamma_1}-1)I_1$};
\draw (0,0.5) coordinate (b_1) -- (2.5,1) coordinate (b_2);
\draw (0,0.5) node [left] {$(4^{\Gamma_2}-1)I_2$};
\coordinate (c) at (intersection of a_1--a_2 and b_1--b_2);
\draw (c) node [above left] {ESE};
\fill[red] (c) circle (1pt);
\draw (1.3,0.9) -- (1.3,0.76);
\draw (1.4,1.02) -- (1.4,0.78);
\draw (1.5,1.13) -- (1.5,0.8);
\draw (1.6,1.24) -- (1.6,0.82);
\draw (1.7,1.36) -- (1.7,0.84);
\draw (1.8,1.47) -- (1.8,0.86);
\draw (1.6,1.2) node [above left] {$(\ref{line 1})$};
\draw (2,0.4) node [above left] {$(\ref{line 2})$};
\end{tikzpicture}
}
\caption{The area of the satisfaction equilibria}
\end{figure}
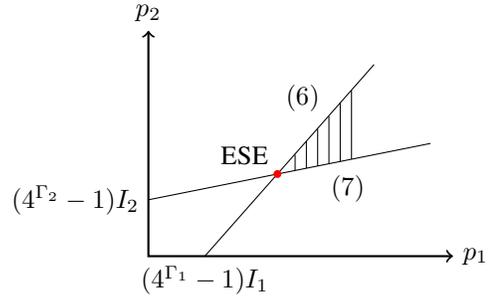

\begin{theorem}
The efficient satisfaction equilibrium exists if and only if $a_{21}a_{12} < \frac{1}{(4^{\Gamma_1}-1)(4^{\Gamma_2}-1)}$ and is equal to $\Big(\frac{(4^{\Gamma_1}-1)(a_{21}(4^{\Gamma_2}-1)I_2+I_1)}{1-(4^{\Gamma_1}-1)(4^{\Gamma_2}-1)a_{21}a_{12}},\frac{(4^{\Gamma_2}-1)(a_{12}(4^{\Gamma_1}-1)I_1+I_2)}{1-(4^{\Gamma_1}-1)(4^{\Gamma_2}-1)a_{21}a_{12}}\Big)$.
\end{theorem}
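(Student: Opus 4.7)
The plan is to characterize the ESE as the simultaneous tight boundary of both satisfaction inequalities, and then invoke the existence criterion already proved in Theorem \ref{intersection}.

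First I would observe that at an efficient satisfaction equilibrium, each player uses the minimum power making her own QoS constraint hold, given the opponent's power. Since $u_i$ is strictly increasing in $p_i$, the minimum such $p_i$ occurs exactly where the constraint $u_i(p_i,p_{-i}) \geq \Gamma_i$ is binding. Hence, by the equivalence established in Theorem 1, the ESE profile $(p_1^*,p_2^*)$ must simultaneously satisfy the two equalities
\[p_1^* = (4^{\Gamma_1}-1)(a_{21}p_2^*+I_1), \quad p_2^* = (4^{\Gamma_2}-1)(a_{12}p_1^*+I_2).\]
These are precisely lines (\ref{line 1}) and (\ref{line 2}) rewritten, so the ESE is their unique intersection whenever it exists.

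Next I would solve this $2\times 2$ linear system explicitly. Writing $\alpha = 4^{\Gamma_1}-1$ and $\beta = 4^{\Gamma_2}-1$ for brevity, substitute the second equation into the first to obtain
\[p_1^*\bigl(1-\alpha\beta a_{21}a_{12}\bigr) = \alpha\bigl(\beta a_{21}I_2 + I_1\bigr),\]
and symmetrically for $p_2^*$. Dividing through yields precisely the closed-form expressions claimed in the statement. This step is purely mechanical.

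For the existence direction, I would argue both ways using the condition $a_{21}a_{12} < 1/(\alpha\beta)$. If this holds, Theorem \ref{intersection} guarantees that the two lines intersect in the first quadrant, so the formulas above return nonnegative powers; moreover this intersection lies inside the SE region (it is its ``corner''), and any other satisfying power profile for player $i$ uses strictly more power than $p_i^*$ by the binding argument, so it is indeed efficient. Conversely, if $a_{21}a_{12}\geq 1/(\alpha\beta)$, either the denominator $1-\alpha\beta a_{21}a_{12}$ is nonpositive or the lines fail to meet in the first quadrant, so no feasible satisfaction profile exists in which both players simultaneously minimize their powers --- hence no ESE.

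The main obstacle I anticipate is not the algebra but cleanly justifying that ``ESE'' forces simultaneous binding of both constraints: one has to rule out the possibility of an efficient satisfaction profile in the interior of the SE region. This follows because reducing $p_i^*$ toward its constraint boundary (with $p_{-i}^*$ fixed) keeps player $i$ satisfied and strictly decreases her power, contradicting the ESE definition unless the boundary is already attained. Once this is stated, the remaining work reduces to the Theorem \ref{intersection} dichotomy together with the explicit inversion above, and the sign of the denominator $1-(4^{\Gamma_1}-1)(4^{\Gamma_2}-1)a_{21}a_{12}$ neatly encodes both the existence and the positivity of the solution.
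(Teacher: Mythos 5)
Your proposal is correct and follows essentially the same route as the paper's own proof: identify the ESE as the point where both constraints bind simultaneously (the corner of the SE region, i.e., the intersection of lines (\ref{line 1}) and (\ref{line 2})), solve the resulting $2\times 2$ linear system to obtain the closed form, and tie existence and nonnegativity of the solution to the condition $a_{21}a_{12} < \frac{1}{(4^{\Gamma_1}-1)(4^{\Gamma_2}-1)}$ via Theorem \ref{intersection}. Your only departure is one of rigor, not substance: you explicitly justify the step the paper dismisses as ``easy to see'' (that an ESE cannot lie in the interior of the SE region, since a player could then reduce her power while remaining satisfied) and you spell out the converse direction, both of which strengthen rather than change the argument.
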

\begin{proof}
For fixed strategy of the one player, the other player tries to be satisfied with the minimum power in the satisfaction equilibrium region. So, it is easy to see that the efficient satisfaction equilibrium must be on all boundaries of the satisfaction region. This is the cross point of the lines (\ref{line 1}) and (\ref{line 2}).
The cross point is the solution of the system
\begin{align*}
p_1 = (4^{\Gamma_1}-1)(a_{21}p_2+I_1),\\
p_2 = (4^{\Gamma_2}-1)(a_{12}p_1+I_2),
\end{align*}
which has solution the pair $\Big(\frac{(4^{\Gamma_1}-1)(a_{21}(4^{\Gamma_2}-1)I_2+I_1)}{1-(4^{\Gamma_1}-1)(4^{\Gamma_2}-1)a_{21}a_{12}},\frac{(4^{\Gamma_2}-1)(a_{12}(4^{\Gamma_1}-1)I_1+I_2)}{1-(4^{\Gamma_1}-1)(4^{\Gamma_2}-1)a_{21}a_{12}}\Big)$.
We want this solution to be non negative. Since the nominators are non negative the denominators must be positive, so $1-(4^{\Gamma_1}-1)(4^{\Gamma_2}-1)a_{21}a_{12}>0$. But this holds since $a_{21}a_{12} < \frac{1}{(4^{\Gamma_1}-1)(4^{\Gamma_2}-1)}$.
\end{proof}
By this theorem we conclude to the following lemma.
\begin{lemma}
The efficient satisfaction equilibrium is unique.
\end{lemma}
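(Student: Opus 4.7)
The plan is to derive uniqueness as an immediate corollary of the preceding theorem. That theorem already pins the ESE down to the simultaneous solution of the two boundary equations (\ref{line 1}) and (\ref{line 2}), and since two non-parallel lines in the plane intersect in exactly one point, uniqueness is essentially baked in. The work, therefore, is just to justify that any ESE necessarily lies on both boundaries and that the coefficient matrix of the resulting $2\times 2$ linear system is nonsingular under the stated existence condition.

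First, I would argue the boundary-membership step. Let $(p_1^*, p_2^*)$ be any ESE. By the definition of ESE in the Gaussian Interference channel setting, for each player $i$ the power $p_i^*$ is the minimum value in $[0,P_{\max}]$ for which $u_i(p_i^*,p_{-i}^*)\geq \Gamma_i$ given the opponent's strategy $p_{-i}^*$. Since $u_i(\cdot,p_{-i}^*)$ is strictly increasing and continuous in $p_i$, this minimum is attained with equality, i.e.\ $u_i(p_i^*,p_{-i}^*)=\Gamma_i$. Applied to $i=1,2$, this yields the two equalities
\[
p_1^* = (4^{\Gamma_1}-1)(a_{21}p_2^*+I_1), \qquad p_2^* = (4^{\Gamma_2}-1)(a_{12}p_1^*+I_2),
\]
which are exactly lines (\ref{line 1}) and (\ref{line 2}).

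Second, I would invoke the existence condition $a_{21}a_{12}<1/((4^{\Gamma_1}-1)(4^{\Gamma_2}-1))$ from the preceding theorem. Under this condition the two lines have distinct slopes, so the linear system has a nonzero determinant and admits a unique solution in $\R^2$, which is precisely the closed-form pair exhibited in the previous theorem. Thus every ESE coincides with this single point, proving uniqueness. I do not anticipate any genuine obstacle: the only mildly delicate point is observing that efficiency is imposed coordinate-wise, so the two boundary equalities are forced to hold at the same profile $(p_1^*,p_2^*)$ simultaneously rather than at two separately optimized profiles.
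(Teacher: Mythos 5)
Your proposal is correct and follows essentially the same route as the paper, which states the lemma as an immediate consequence of the preceding theorem: there the ESE is already argued to lie on both boundary lines (\ref{line 1}) and (\ref{line 2}), whose crossing point is the single closed-form pair given. You merely make explicit the two steps the paper leaves implicit — that strict monotonicity of $u_i$ in $p_i$ forces equality at the minimum power, and that the slope condition $\frac{1}{(4^{\Gamma_1}-1)a_{21}} > (4^{\Gamma_2}-1)a_{12}$ makes the $2\times 2$ system nonsingular — which is a sound elaboration rather than a different argument.
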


We now search for the valued satisfaction equilibria of the game and we conclude to the following theorem.
\begin{theorem}
\label{Valued SE}
The valued satisfaction equilibrium coincides with the efficient satisfaction equilibrium.
\end{theorem}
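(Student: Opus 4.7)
The plan is to show that, for each player $i$, fixing the opponent's strategy $\mathbf{p_{-i}^*}$, the ratio $p_i/u_i(p_i,\mathbf{p_{-i}^*})$ is a strictly increasing function of $p_i$. Once this monotonicity is established, the minimizer of the ratio over the satisfaction region $\{p_i : u_i(p_i,\mathbf{p_{-i}^*}) \ge \Gamma_i\}$ must lie on the boundary $u_i(p_i,\mathbf{p_{-i}^*}) = \Gamma_i$, which is exactly the condition that characterizes the efficient satisfaction equilibrium (minimum power to be satisfied). Applying this player by player and appealing to the previous theorem identifying the unique ESE then yields the coincidence.

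Concretely, I would fix $\mathbf{p_{-i}^*}$, define $S_i = \sum_{j \neq i} a_{ji} p_j^* + I_i > 0$, and write
\[
\frac{p_i}{u_i(p_i,\mathbf{p_{-i}^*})} = \frac{2 p_i}{\log\!\bigl(1 + p_i/S_i\bigr)}.
\]
Letting $y = p_i/S_i$, monotonicity of the ratio in $p_i$ reduces to monotonicity of $y/\log(1+y)$ in $y > 0$, which in turn reduces to the inequality
\[
\log(1+y) \;\ge\; \frac{y}{1+y} \qquad (y>0).
\]
This I would verify by setting $h(y) = \log(1+y) - y/(1+y)$, noting $h(0)=0$ and $h'(y) = y/(1+y)^2 > 0$.

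From strict monotonicity it follows that, for each $i$, the unique minimizer of $p_i/u_i$ in the satisfaction region occurs at the smallest admissible power, i.e.\ where the satisfaction constraint $u_i(p_i,\mathbf{p_{-i}^*}) = \Gamma_i$ is tight. Hence any valued satisfaction equilibrium $(p_1^*,p_2^*)$ must satisfy both boundary equations (\ref{line 1}) and (\ref{line 2}) simultaneously, and by the previous theorem this system has the unique ESE as its solution. Conversely, the ESE is clearly a satisfaction equilibrium and, by the same minimizer characterization, satisfies the valued SE inequality, so the two notions coincide.

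I do not expect any serious obstacle here; the only analytic content is the elementary inequality $\log(1+y) \ge y/(1+y)$, and the rest is a direct reduction to the already-established characterization of the ESE as the intersection of the two boundary lines.
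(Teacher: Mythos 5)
Your proposal is correct and takes essentially the same route as the paper's own proof: fix the opponent's strategy, show that the ratio $p_i/u_i(p_i,\mathbf{p_{-i}^*})$ is strictly increasing in $p_i$, conclude that the minimizer lies on the satisfaction boundary, and hence identify the valued satisfaction equilibrium with the intersection point, i.e., the unique ESE. The only difference is one of polish: where the paper merely asserts that the derivative's numerator is positive, you make the underlying elementary inequality $\log(1+y) \geq y/(1+y)$ explicit via $h(0)=0$ and $h'(y)=y/(1+y)^2>0$, which slightly tightens the same argument.
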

\begin{proof}
We do the analysis from the player 1 point of view, the analysis for the other player is symmetric.
Fix a strategy $p_2^*$ of the player 2. Then, we search for the strategy of the player 1 that minimizes the $\frac{p_1}{\frac{1}{2}\log(1+\frac{p_1}{a_{21}p_2^*+I_1})}$. The derivative of this is
\begin{multline*}
\frac{\frac{1}{2}\log(1+\frac{p_1}{a_{21}p_2^*+I_1}) - \frac{\frac{1}{2}p_1}{p_1+a_{21}p_2^*+I_1}}{\frac{1}{4}(\log(1+\frac{p_1}{a_{21}p_2^*+I_1}))^2}\\ 
= \frac{\frac{1}{2}\log(1+\frac{p_1}{a_{21}p_2^*+I_1}) - \frac{1/2}{1+\frac{a_{21}p_2^*+I_1}{p_1}}}{\frac{1}{4}(\log(1+\frac{p_1}{a_{21}p_2^*+I_1}))^2},
\end{multline*}
which is positive for $p_1>0$, so in this interval the function is increasing. Thus, the best strategy for the player 1 is the minimum in the satisfaction equilibrium area. This means similarly to the ESE that the valued satisfaction equilibrium is on all boundaries of the satisfaction equilibrium area, so on the cross point.
\end{proof}

\section{The Price of Efficiency and the Max Price of Satisfaction} \label{sec:PoE}
We study the Price of Efficiency in the Gaussian Interference channel as a measurement of the efficiency of the efficient satisfaction equilibrium in terms of the tradeoff between the achieved rate and the power cost to achieve this rate. We conclude to the following theorem.

\begin{theorem}
The Price of Efficiency in the Gaussian Interference channel is equal to 1.
\end{theorem}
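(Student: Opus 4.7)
The plan is to observe that this theorem follows almost immediately from the two previously established facts: the uniqueness of the efficient satisfaction equilibrium (the lemma following the ESE existence theorem) and Theorem \ref{Valued SE} stating that the valued satisfaction equilibrium coincides with the efficient satisfaction equilibrium. So the strategy is essentially to unwind the definition of the Price of Efficiency and substitute these identifications.

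More concretely, I would first recall the definition of PoE, noting that the numerator is the sum $\sum_i c_i(\mathbf{a})/u_i(\mathbf{a})$ evaluated at the worst efficient satisfaction equilibrium $\mathbf{a}$ (with respect to the summation objective), while the denominator is the same quantity evaluated at the best valued satisfaction equilibrium $\mathbf{a^*}$. In the present GIC setting, the cost is the transmit power $p_i$ and the utility is the achievable rate as in \eqref{utility function2}.

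Next, I would apply the uniqueness lemma: there is exactly one ESE, so the notions of ``worst'' and ``best'' ESE collapse onto the same unique strategy profile, namely the crossing point of lines \eqref{line 1} and \eqref{line 2} computed earlier. Then I would apply Theorem \ref{Valued SE}, which guarantees that this same profile is simultaneously the (unique) valued satisfaction equilibrium. Therefore $\mathbf{a}=\mathbf{a^*}$, the numerator and denominator of the PoE ratio are identical, and the ratio is $1$.

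There is essentially no obstacle here, since all the heavy lifting was done in proving uniqueness of the ESE and in Theorem \ref{Valued SE}; the only thing worth being slightly careful about is confirming that the hypothesis $c_i(a)>0$ and $u_i(a)>0$ needed in the PoE definition is satisfied at the equilibrium point, which holds because at the crossing point both coordinates are strictly positive (as computed in the ESE formula) and the corresponding rates $u_i$ equal the positive thresholds $\Gamma_i>0$. After this sanity check, the conclusion $PoE=1$ is immediate.
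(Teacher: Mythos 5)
Your proposal is correct and takes essentially the same route as the paper, which likewise derives $PoE=1$ directly from the coincidence of the valued satisfaction equilibrium with the (unique) efficient satisfaction equilibrium established in Theorem~\ref{Valued SE}; you merely spell out the steps the paper leaves implicit (uniqueness collapsing ``best'' and ``worst'', and the positivity sanity check), which is a welcome but not substantively different elaboration.
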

\begin{proof}
Since, the efficient satisfaction equilibrium coincides with the valued satisfaction equilibrium it is easy to see that the PoE is equal to 1.
\end{proof}

We now study the maximum price of satisfaction taking as an objective function a function that minimizes the sum of the powers, or in other words maximizes the ratio one over the sum of the powers. It is easy to see that the efficient satisfaction equilibrium is the optimum solution of this objective function in the area of the satisfaction equilibria. On the other hand assuming that the $(P_{\max},P_{\max})$ is in the satisfaction equilibrium area, this is the worst case of this objective function. Thus, it is easy to see that the maximum price of satisfaction is equal to the ratio $\frac{2P_{\max}}{\sum\limits_{\forall i} P_i}$, where $P_i$ is the power of the player $i$ in the ESE. 

\section{Concluding Remarks} \label{sec:Conclusions}
In this paper, we argued that by rethinking the overall traditional QoS provisioning and user experience perception in wireless networks that targeted QoS maximization, we can devise energy-efficient, scalable, and rewarding solutions from both practical and theoretical viewpoints, through the adoption of the general concept of satisfaction equilibrium. Therefore, a novel holistic framework was introduced towards studying different efficient and feasible wireless network operation points (i.e., different game theoretic satisfaction equilibria points) that meet users’ expectations, while resulting in substantial cost or energy savings, which are critical factors especially in resource-constrained environments. This new framework was in particular analyzed in a wireless communication environment under the presence of the Gaussian Interference channel, while the efficiency of the various satisfaction equilibria was evaluated in a formal and quantitative manner. The current analysis can be extended via considering multi-service wireless communication environments, where the users' utilities are differentiated based on their requested services, i.e., real-time and non-real time services \cite{tsiropoulou2013efficient}. In this case, the study of Price of Efficiency is interesting and challenging, as the problem becomes more complicated indicating that the PoE may not be equal to $1$.

It should be noted that the realization of new decision-making paradigms addressing the requirements of flexibility, adaptability, and autonomicity in future wireless networks, depends on bridging the gap between the considered game theoretic methodologies described above, and practical wireless applications and settings. The efficient computation of the satisfaction equilibria is hard in practice even for games which satisfy the necessary conditions of existence and convergence to them in theory, due to the inherent difficulties that wireless networks face in practice. Part of our current and future research work, includes the use of learning approaches in game theory, in order to deal with the technical and implementation challenges stemming from the incompleteness of available information regarding the game structure and uncertainty on the observations of the users and their actions in the game that in turn may influence decision making and equilibrium identification and convergence. 

\section*{Acknowledgment}
\noindent
The research of Dr. Michail Fasoulakis and Dr. Symeon Papavassiliou was partially supported by the NTUA-GSRT Research Award under Grant Number 67104700. The research of Dr. Eirini Eleni Tsiropoulou was conducted as part of the UNM Research Allocation Committee award and the UNM Women in STEM Faculty Development Fund. 
\bibliographystyle{IEEEtran}
\bibliography{bibl}
\end{document}